\documentclass[letterpaper,11pt]{article}
\usepackage{amssymb}
\usepackage{amsfonts}
\usepackage{amssymb,amsmath,amsthm}

\usepackage{subfig}
\usepackage{graphicx}

\usepackage[ruled]{algorithm}
\usepackage[noend]{algpseudocode}

\usepackage[letterpaper,nohead,margin={1in,1.2in}]{geometry}

\usepackage{setspace}
\usepackage{slashbox}
\usepackage{nonfloat}
\usepackage{subfig}

\usepackage{bm}
\onehalfspacing

\vfuzz2pt 
\hfuzz2pt 

\newtheorem{theorem}{Theorem}[section]
\newtheorem{corollary}[theorem]{Corollary}
\newtheorem{lemma}[theorem]{Lemma}

\theoremstyle{definition}
\newtheorem{definition}[theorem]{Definition}

\theoremstyle{remark}

\numberwithin{equation}{section}



\begin{document}
\title{Partial Degree Bounded Edge Packing Problem}

\author{
Peng Zhang\thanks{Shanghai Key Laboratory of Trustworthy Computing, East
China Normal University, Shanghai, P.R.China. Email:
\texttt{arena.zp@gmail.com}\,.} }

\maketitle

\begin{abstract}
In \cite{zp12}, whether a target binary string $s$ can be represented from a boolean formula with operands chosen from a set of binary strings $W$ was studied. In this paper, we first examine selecting a maximum subset $X$ from $W$, so that for any string $t$ in $X$, $t$ is not representable by $X \setminus \{t\}$. We rephrase this problem as graph, and surprisingly find it give rise to a broad model of edge packing problem, which itself falls into the model of forbidden subgraph problem. Specifically, given a graph $G(V,E)$ and a constant $c$, the problem asks to choose as many as edges to form a subgraph $G'$. So that in $G'$, for each edge, at least one of its endpoints has degree no more than $c$. We call such $G'$ partial $c$ degree bounded. When $c=1$, it turns out to be the complement of dominating set. We present several results about hardness, approximation for the general graph and efficient exact algorithm on trees. This edge packing problem model also has a direct interpretation in resource allocation. There are $n$ types of resources and $m$ jobs. Each job needs two types of resources. A job can be accomplished if either one of its necessary resources is shared by no more than $c$ other jobs. The problem then asks to finish as many jobs as possible. We believe this partial degree bounded graph problem merits more attention.
\end{abstract}

\section{Introduction}
An elementary problem of set operations is stated as follows. Given a collection of subsets of the universe, what new subsets can be generated if \emph{union} and  \emph{intersection} are allowed. By denoting the subset as an indicator vector, we reformulate it like this. Given two binary strings with the same length, namely $s_1, s_2$, let $s_1 \wedge
s_2$ (resp. $s_1 \vee s_2$) be the binary string produced by bitwise AND $\wedge$
(resp. OR $\vee$) of $s_1$ and $s_2$. Given a set of $m$ bits long binary strings,
namely, $W=\{s_1,s_2, \cdots, s_n\}$, $s_i \in \{0,1\}^m$, if there is a
formula $\phi$ which calculates $s$, with operators in $\{ \wedge, \vee \}$ and
operands in some subset of $W$, then we say the target string $s$ is
representable by (or expressible from) $W$ via formula $\phi$, or simply $s$ is representable.

A natural variant of this problem is finding a maximum subset, in which each string is not representable by the others. We call this variant Maximum Expressive Independent Subset (MEI) problem and examine the restricted case on strings with exactly two ones. Surprisingly, this is equivalent as maximum edge packing under partial degree bounded by 2.

This paper is structured as follows. We study the hardness of edge packing bounded by 1, by 2 and by a constant less than $\Delta(G)$ on graph in section \ref{sec:MEIS}. Then we study the general edge packing on trees in section \ref{sec:PcBonTree}. In section \ref{sec:approxAlgs}, approximation algorithms for bounded 1 and bounded 2 edge packing are built. Some conclusions are given in section \ref{sec:conclusion}. Because the problem only concerns edges selecting, we assume the graph we deal with is free of isolated vertex.

\subsection{Related work}
The decision problem of edge packing bounded by 1 turns out to be a parametric dual of the well known dominating set(DS). The parametric dual means that for graph $G(V,E)$, a $k$ sized dominating set implies a $|V|-k$ sized edge packing, and vice versa. The parametric dual of DS was studied in \cite{Nieminen74boundsForDomination}, in which the edges packed are called pendant edges. Further, the dual was well studied in the framework of parameterized complexity by Frank Dehne, etc in \cite{Dehne06nonblocker:parameterized}. They coined the dual as NONBLOCKER problem and showed a linear kernel of $5/3\bm\cdot k_d +3$, where $k_d$ is the solution size.

\section{Maximum Expressible Independent Subset}\label{sec:MEIS}
At first, we introduce some notations used in \cite{zp12}. Let $x$ denote any binary string, $b_{i}^{x}$ denote the $i^{\text{th}}$ bit of
$x$. So, $x=b_{1}^{x}b_{2}^{x} \cdots b_{m}^{x}$. Also, we define a function
$\mathsf{Zero}: \mathsf{Zero}(x)=\{i|b_{i}^{x}=0\}$, from a binary string to a
set of natural numbers which denotes the indices of bits with value 0 in the
binary string. Similarly, $\mathsf{One}(x)$ denotes the indices of 1 valued
bits of $x$. Also, $\mathbf{0}$ (resp. $\mathbf{1}$) denotes a binary string
with no 1 (resp. 0) valued bits. Let $N_i$ denote the set of strings whose $i^{th}$ bit is 1, i.e,
$N_i=\{y|b_i^y=1,y \in W\}$. In addition, $T_i$ denotes the set of binary strings in $W$ whose
$i^{\text{th}}$ bit value is 0, i.e., ${T_i} = \{x \in W|b^x_i=0\}$. Let ${t_i}
= \bigvee_{x \in {T_i}} x$.

\begin{definition}[Expressible Independent Set (EI)]
A set $X$ of binary strings is expressible independent if and only if for each binary string
$x \in X$, $x$ is not expressible from $X\setminus \{x\}$.
\end{definition}

Then the Maximum Expressible Independent Subset (MEIS) problem is defined as follows. Given a set $W$ of binary strings, MEIS asks to find a maximum expressible independent subset of $W$. The decision version with parameter $k$ is denoted as MEIS$(W,k)$.

\subsection{MEIS on $2$-regular set}
We first pay attention to a restricted case of MEIS, when each binary string has the same number of bits valued 1.
And we refer to the following theorem \ref{lem:basic} from \cite{zp12}.

\begin{theorem}\label{lem:basic}
Given $(W,s)$ where $s \neq \mathbf{1}$, then $s$ is expressible from $W$ if and only if
$\forall i \in \mathsf{Zero}(s)$, $\mathsf{One}(s) \subseteq
       \mathsf{One}(t_i)$.
\end{theorem}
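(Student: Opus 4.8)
The plan is to reduce everything to the monotonicity of the two operations on one-sets: for any strings $x,y$ we have $\mathsf{One}(x \wedge y) = \mathsf{One}(x) \cap \mathsf{One}(y)$ and $\mathsf{One}(x \vee y) = \mathsf{One}(x) \cup \mathsf{One}(y)$. Expressibility is therefore purely a statement about which index sets can be generated from $\{\mathsf{One}(x) : x \in W\}$ by unions and intersections, and I would phrase both directions in those terms.

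For the forward direction (necessity), I would argue by structural induction on a formula $\phi$ with $\phi = s$ whose operands are drawn from $W$, after fixing an arbitrary $i \in \mathsf{Zero}(s)$. The base case $\phi = x$ is immediate: $b_i^s = 0$ forces $x \in T_i$, hence $\mathsf{One}(x) \subseteq \mathsf{One}(t_i)$. For $\phi = \phi_1 \vee \phi_2$ with $b_i^\phi = 0$, both subformulas evaluate to $0$ at position $i$, so the induction hypothesis applies to each and $\mathsf{One}(\phi) = \mathsf{One}(\phi_1) \cup \mathsf{One}(\phi_2) \subseteq \mathsf{One}(t_i)$. For $\phi = \phi_1 \wedge \phi_2$ with $b_i^\phi = 0$, at least one subformula, say $\phi_1$, is $0$ at position $i$, and $\mathsf{One}(\phi) \subseteq \mathsf{One}(\phi_1) \subseteq \mathsf{One}(t_i)$. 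Since $t_i$ is the OR over all of $T_i$, which is a superset of whatever operands $\phi$ uses, enlarging to the full $T_i$ only grows $\mathsf{One}(t_i)$ and the inclusion is preserved.

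For the backward direction (sufficiency), I would exhibit the explicit witness
\[
\psi = \bigwedge_{i \in \mathsf{Zero}(s)} t_i = \bigwedge_{i \in \mathsf{Zero}(s)} \bigvee_{x \in T_i} x,
\]
a legitimate formula over $W$, and show $\psi = s$ by proving $\mathsf{One}(\psi) = \mathsf{One}(s)$. One inclusion is immediate from the hypothesis: $\mathsf{One}(s) \subseteq \mathsf{One}(t_i)$ for every $i \in \mathsf{Zero}(s)$ gives $\mathsf{One}(s) \subseteq \bigcap_i \mathsf{One}(t_i) = \mathsf{One}(\psi)$. The reverse inclusion rests on the single key fact that $b_i^{t_i} = 0$, because every $x \in T_i$ is $0$ at position $i$; hence each $i \in \mathsf{Zero}(s)$ is excluded from $\mathsf{One}(t_i)$ and therefore from the intersection, giving $\mathsf{Zero}(s) \subseteq \mathsf{Zero}(\psi)$ and closing the argument.

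The main obstacle I anticipate is not the two inductions but the degenerate cases the hypothesis $s \neq \mathbf{1}$ is designed to rule out. When $\mathsf{Zero}(s) = \emptyset$ the formula $\psi$ is an empty conjunction, which need not be expressible from $W$ at all, and this is precisely why $s = \mathbf{1}$ must be excluded. I would also confirm the construction degrades gracefully when some $T_i$ is empty (every string of $W$ carries a $1$ at position $i$): then $\mathsf{One}(t_i) = \emptyset$, the hypothesis forces $\mathsf{One}(s) = \emptyset$, and $s = \mathbf{0}$ would have to be checked separately against the empty-OR/empty-AND conventions. Pinning down these conventions is the only delicate point; the core equivalence follows cleanly from monotonicity.
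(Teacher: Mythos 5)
The paper does not actually prove this theorem: it is imported from \cite{zp12} (the text says it ``refers to'' the theorem from that reference), so there is no internal proof to compare your attempt against, and I can only judge it on its own merits. Your argument is correct and is the natural one. Necessity by structural induction on the witnessing formula, using $\mathsf{One}(x\wedge y)=\mathsf{One}(x)\cap\mathsf{One}(y)$ and $\mathsf{One}(x\vee y)=\mathsf{One}(x)\cup\mathsf{One}(y)$ together with the observation that an OR is $0$ at position $i$ only if all disjuncts are while an AND is $0$ if some conjunct is, is sound; sufficiency via the canonical witness $\psi=\bigwedge_{i\in\mathsf{Zero}(s)}t_i$ works because the hypothesis gives $\mathsf{One}(s)\subseteq\mathsf{One}(\psi)$ while $b_i^{t_i}=0$ excludes every $i\in\mathsf{Zero}(s)$ from $\mathsf{One}(\psi)$, and $s\neq\mathbf{1}$ is exactly what makes the conjunction nonempty. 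One remark: the degenerate case you flag is not merely a matter of conventions but an actual counterexample to the statement as transcribed here. Take $W=\{10\}$ and $s=00$: then $\mathsf{One}(s)=\emptyset$ is contained in every $\mathsf{One}(t_i)$, so the criterion is satisfied, yet every $\{\wedge,\vee\}$-formula over $W$ evaluates to $10$, so $s$ is not expressible. The theorem therefore implicitly requires $s\neq\mathbf{0}$ or every $T_i$ nonempty. This does not affect the present paper, which only invokes the theorem through Lemma \ref{lem:2regular} for $2$-regular strings, where $\mathsf{One}(s)$ has two elements and the degenerate case cannot arise.
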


\begin{definition}[c-regular set]
A binary string is $c$-regular if and only if it contains exactly $c$ one bits. A set of binary strings is $c$-regular if and only if each element is $c$-regular.
\end{definition}

\begin{lemma} \label{lem:2regular}
Given a \emph{$2$-regular} set $W$ and a \emph{$2$-regular} string $x \not\in W$, $\mathsf{One}(x)=\{i,j\}$, then $x$ is expressible from $W$ if and
only if $|N_i|\ge 2$ and $|N_j|\ge 2$.
\end{lemma}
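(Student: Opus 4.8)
The plan is to invoke Theorem \ref{lem:basic} directly. Since $x$ is $2$-regular we have $x \neq \mathbf{1}$ (assuming $m \geq 3$; for $m = 2$ the only $2$-regular string is $\mathbf{1}$, which falls outside the theorem's scope), so the theorem applies with $s = x$. Writing $\mathsf{One}(x) = \{i,j\}$, the zero set is $\mathsf{Zero}(x) = \{1,\dots,m\}\setminus\{i,j\}$, and the theorem reduces expressibility of $x$ to the condition that for every position $k \notin \{i,j\}$ we have $\{i,j\} \subseteq \mathsf{One}(t_k)$. I would first record the elementary translation that drives everything: because $t_k = \bigvee_{y \in T_k} y$, we have $\mathsf{One}(t_k) = \bigcup_{y \in T_k}\mathsf{One}(y)$, so $i \in \mathsf{One}(t_k)$ holds exactly when some $y \in W$ satisfies $b_i^y = 1$ and $b_k^y = 0$. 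Under $2$-regularity such a $y$ is precisely a string with $\mathsf{One}(y) = \{i,\ell\}$ for some $\ell \neq k$, and moreover each string of $N_i$ is uniquely determined by its single partner index $\ell \neq i$.

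For the ``if'' direction, suppose $|N_i| \geq 2$ and $|N_j| \geq 2$, and fix any $k \notin \{i,j\}$. The (at least two) distinct strings of $N_i$ have pairwise distinct partner indices, so at most one of them can have partner index equal to $k$; any other has its partner $\ell \neq k$, hence $b_k^y = 0$, which witnesses $i \in \mathsf{One}(t_k)$. The identical argument applied to $N_j$ yields $j \in \mathsf{One}(t_k)$. Thus $\{i,j\} \subseteq \mathsf{One}(t_k)$ for every admissible $k$, and $x$ is expressible by Theorem \ref{lem:basic}.

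For the ``only if'' direction I would argue by contraposition, assuming $|N_i| \leq 1$ (the case $|N_j| \leq 1$ being symmetric) and exhibiting a zero position of $x$ at which the theorem's condition fails. If $N_i = \emptyset$, then no string carries a $1$ at position $i$, so $i \notin \mathsf{One}(t_k)$ for every $k$, and any $k \in \mathsf{Zero}(x)$ suffices. If $N_i = \{y_0\}$ with $\mathsf{One}(y_0) = \{i,\ell_0\}$, then $y_0$ is the unique candidate for $i \in \mathsf{One}(t_k)$, and it disqualifies exactly when $b_k^{y_0} \neq 0$, i.e. when $k \in \{i,\ell_0\}$; I would therefore take $k = \ell_0$. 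The only real obstacle is verifying that $\ell_0$ is genuinely a zero position of $x$, namely $\ell_0 \notin \{i,j\}$: we have $\ell_0 \neq i$ by $2$-regularity of $y_0$, and $\ell_0 \neq j$ because otherwise $\mathsf{One}(y_0) = \{i,j\} = \mathsf{One}(x)$ would force $y_0 = x \in W$, contradicting $x \notin W$. Hence $\ell_0 \in \mathsf{Zero}(x)$ while $i \notin \mathsf{One}(t_{\ell_0})$, so $x$ is not expressible, completing the contrapositive and the proof.
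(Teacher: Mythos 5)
Your proof is correct. On the ``only if'' direction it coincides with the paper's argument (contraposition via Theorem \ref{lem:basic}, taking the partner index $\ell_0$ of the unique element of $N_i$ as the failing zero position); in fact you are more careful than the paper, which never verifies that $\ell_0 \in \mathsf{Zero}(x)$ --- the observation that $\ell_0 \neq j$ needs $x \notin W$ --- and does not flag the degenerate case where $\mathsf{Zero}(x)$ could be empty. The genuine divergence is in the ``if'' direction: you verify the abstract criterion of Theorem \ref{lem:basic}, showing $\{i,j\} \subseteq \mathsf{One}(t_k)$ for every $k \in \mathsf{Zero}(x)$ by counting partner indices among the elements of $N_i$ and $N_j$, whereas the paper simply exhibits an explicit witnessing formula: choosing distinct $a,b \in N_i$ and $c,d \in N_j$, two-regularity forces $\mathsf{One}(a \wedge b) = \{i\}$ and $\mathsf{One}(c \wedge d) = \{j\}$, so $(a \wedge b) \vee (c \wedge d) = x$. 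The paper's route is shorter, does not rely on Theorem \ref{lem:basic} for that direction, and yields the extra information that a depth-two formula with four operands always suffices; your route is uniform (both implications flow through the same characterization) and makes the bookkeeping about which strings lie in each $T_k$ fully explicit. Both arguments are valid.
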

\begin{proof}
\emph{Sufficiency}: We prove its contrapositive. By symmetry, suppose
that  $|N_i|\le 1$. If $|N_i|=0$, then $ i \not\in
\mathsf{One}(t_l), l\in \mathsf{Zero}(x)$. If $N_i=\{y\}$, assume
that $\mathsf{One}(y)=\{l,i\}$, then $ i \not\in \mathsf{One}(t_l)$. In both cases,
$\mathsf{One}(x) \nsubseteq \mathsf{One}(t_l)$, thus $x$ is not expressible from $W$ according to Lemma \ref{lem:basic}. 

\emph{Necessity}: If $|N_i|\ge 2$ and $|N_j| \ge 2$, we assume
$\{a,b\}\subseteq N_i$ and $\{c,d\}\subseteq N_j$. It is easy to
check that $(a\land b)\lor(c \land d)=x$.
\end{proof}

\begin{definition}[Partial Degree Bounded Graph]
An undirected graph $G(V,E)$ is partial $c$ bounded (P$c$B) if and only if $\forall_{e(u,v) \in E}{ (d_u \le c \bigvee d_v \le c)}$. $d_u$ is the degree of $u$.
\end{definition}

Given a graph $G$, the Maximum Partial $c$ Degree Bounded Graph problem asks to find a P$c$B subgraph $G'$ of $G$ with maximum edges. The decision version with parameter $k$ is denoted as P$c$B$(G,k)$. In the setting of resource allocation, each vertex stands for a resource, each edge stands for a job. And an optimum P$c$B subgraph maps to an optimum resource allocation.

Now, we will rephrase MEIS$(W,k)$ on $2$-regular set as P2B$(G,k)$. Let $W \subseteq
\{0,1\}^m$, we construct the corresponding graph $G(V,E)$
as follows. Vertex $v_i \in V$ corresponds to the $i^{th}$
bit of string. Each edge $(v_i,v_j)\in E$ corresponds to a
string $x \in W$ whose $\mathsf{One}(x)=\{i,j\}$. According to Lemma
\ref{lem:2regular}, MEIS$(W,k)$ has a solution if and only if
$P2B(G,k)$ has a solution. Just select the corresponding edges in $G$, and select the corresponding strings in $W$ vice versa. The reduction can be done in the reverse way. So it is just a rephrasing.

\begin{lemma}\label{lem:PartialOne}
$P1B(G,k)$ is NP-complete.
\end{lemma}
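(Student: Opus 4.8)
The plan is to prove NP-completeness by a polynomial-time reduction from the classical \textsc{Dominating Set} problem, exploiting the parametric duality already advertised in the related-work discussion. Membership in NP is immediate: a candidate P1B subgraph $G'$ is a succinct certificate, since one can verify in polynomial time, by a single pass over the degree sequence of $G'$, both that $G'$ has at least $k$ edges and that every edge of $G'$ has an endpoint of degree at most $1$. The substance of the argument therefore lies in the hardness direction.

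The heart of the reduction is a structural characterization of P1B subgraphs. First I would observe that a subgraph $G'$ is P1B if and only if each of its connected components is a star. Indeed, if a vertex $u$ has degree at least $2$ in $G'$, then every edge incident to $u$ forces its other endpoint to have degree exactly $1$, so all neighbors of $u$ are leaves and the component is a star centered at $u$; any remaining component has all degrees at most $1$ and is thus a single edge or an isolated vertex, both of which are degenerate stars. Consequently, maximizing the number of edges of a P1B subgraph is exactly the problem of finding a spanning star forest of $G$ with the maximum number of edges (unused vertices may be adjoined as isolated trivial stars without changing the edge count).

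Next I would establish the exact duality $\mathrm{opt}(G) = |V| - \gamma(G)$, where $\mathrm{opt}(G)$ denotes the maximum number of P1B edges and $\gamma(G)$ the domination number. For the lower bound, given a dominating set $D$ I assign each vertex outside $D$ to one neighbor in $D$; the resulting subgraph is a star forest with centers in $D$ and exactly $|V|-|D|$ edges, so $\mathrm{opt}(G) \ge |V| - \gamma(G)$. For the converse, I view a maximum P1B subgraph as a spanning star forest; being a forest, its number of components equals $|V|$ minus its number of edges, and choosing one center per component (the vertex itself for a trivial component) yields a dominating set of that size, whence any P1B subgraph with $m'$ edges certifies $\gamma(G) \le |V| - m'$ and thus $\mathrm{opt}(G) \le |V| - \gamma(G)$. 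Combining both bounds gives the equality, and the reduction follows: an instance $(G,\kappa)$ of \textsc{Dominating Set} asking whether $\gamma(G)\le\kappa$ is a yes-instance if and only if $P1B(G,|V|-\kappa)$ is, since $\gamma(G)\le\kappa \iff |V|-\gamma(G)\ge |V|-\kappa \iff \mathrm{opt}(G)\ge |V|-\kappa$. This map is clearly polynomial-time computable.

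I expect the main obstacle to be the clean proof of the structural characterization together with the converse direction of the duality, where one must argue carefully that the per-component centers of an arbitrary optimal star forest indeed form a dominating set of the claimed cardinality; once this correspondence is pinned down, both NP membership and the reduction itself are routine. A minor technical point to dispatch is the treatment of isolated vertices in the \textsc{Dominating Set} instance, which are forced into every dominating set and can be removed in advance or absorbed as trivial stars without disturbing the equivalence, consistently with the paper's standing assumption that the graphs considered have no isolated vertices.
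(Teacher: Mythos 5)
Your proof is correct and takes essentially the same route as the paper: both establish NP membership by direct verification and then prove the exact parametric duality $\mathrm{opt}(G)=|V|-\gamma(G)$ via the star-forest characterization of P1B subgraphs, with the forward direction assigning each non-dominating vertex to a dominator and the converse counting components of a spanning star forest. Your write-up is somewhat more explicit about the structural characterization and the component-counting step, but the underlying argument is the same.
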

\begin{proof}
Given a graph $G(V,E)$, P1B$(G,k)$  is in NP
trivially because we can check in $O(|E|)$ time that whether the
given subgraph $G'$ is partial 1 bounded. We prove its NP-completeness
by showing that, there is a a $k$ sized partial 1 bounded subgraph $G'$ if and only if
there is a $n-k$ sized dominating set $D$ of $G$, $n=|V|$. Note
that, any partial 1 bounded graph is a set of node-disjoint stars.

\emph{Necessity}: If $D=\{v_1,\cdots,v_k\}$ is a dominating set,
then we can construct a $k$ node-disjoint stars as following, which is a partition of $G$. Let
$P_i(V_i,E_i)$ denote the $i^{th}$ star being constructed. For each
vertex $u \in V\setminus D$, if $u$ is dominated by $v_i$, add $u$
into $V_i$ and $(v_i,u)$ into $E_i$. To make the stars
node-disjoint, when $u$ is dominated by more than one vertices,
break the ties arbitrarily. Note that, $E_i$ may be empty, that is,
the $P_i$ only contains an isolated vertex. So $\sum_{i\le k}{|E_i|}=\sum_{i\le k}{|V_i|}-k=n-k$. Thus $\bigcup_{v_i \in D}{P_i}$ is a $n-k$ sized partial 1 bounded graph.

\emph{Sufficiency}: If there is a $G'$ with $|E_{G'}|=n-k$. Suppose that
$G'$ contains $n_0$ stars without leaf (i.e., isolated vertices) and
$n_1$ stars with at least one leaf. It is easy to see, $n-k =
(n-n_0)-n_1$. Thus $n_0+n_1=k$, so we just select the isolated
vertices and the internal node of the $n_1$ stars. They make up a $k$ sized dominating set.
\end{proof}

\begin{lemma}\label{lem:P2BNPC}
$P2B(G,k)$ is NP-complete, so is MEIS$(W,k)$ on 2-regular set.
\end{lemma}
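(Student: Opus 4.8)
Membership in NP is immediate: given a candidate subgraph $G'$, we verify in $O(|E|)$ time that it has at least $k$ edges and that every retained edge has an endpoint of degree at most $2$, exactly as in Lemma \ref{lem:PartialOne}. The real work is hardness, and I would prove it by a polynomial reduction from a known NP-complete problem---either $3$-SAT or a suitably restricted MAXIMUM INDEPENDENT SET---into P2B. The guiding principle is the structural observation that, in any partial $2$ bounded graph, no retained edge may join two vertices that both have degree at least $3$; equivalently, if a vertex is forced to degree $\ge 3$ (call it \emph{saturated}), then every edge incident to it must be matched by a low-degree partner. This forbidden configuration---an edge between two saturated vertices---is the logical switch the reduction exploits, since keeping such an edge is impossible and a binary ``conflict'' must therefore be resolved by the packing.

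Concretely, I would build, for each boolean variable (or each vertex of the source graph), a small \emph{rigid hub gadget}: a constant-size subgraph attached to a designated hub vertex whose unique edge-maximal packing forces the hub to degree exactly $3$, so that the hub is saturated in every optimal solution. The gadgets would be coupled through contested edges that directly join two saturated hubs; because such an edge can never be retained, the optimal packing is compelled to break the symmetry in one of the two incident gadgets, and this broken symmetry is read off as a truth value (or as membership in the independent set). Clause gadgets would be attached so that one extra edge becomes packable precisely when at least one incident literal is set true. Setting the threshold $k$ to the forced baseline count plus one unit per clause (respectively, per selected vertex) makes a packing of size $k$ exist if and only if the source instance is satisfiable (respectively, has an independent set of the required size).

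The main obstacle is the construction of the rigid hub gadget. Unlike the P1B case, where pendant edges interact cleanly with domination, here low-degree detour vertices---leaves and subdivision vertices---satisfy the partial $2$ bound trivially, so a careless gadget lets the packing keep \emph{every} edge and collapses the reduction. The gadget must therefore make the hub strictly prefer degree $\ge 3$ while ensuring that the only edges whose inclusion is genuinely contested are precisely those encoding the combinatorial choice; I would verify this by a short case analysis over the constant number of local packing states, checking that no mixed or ``cheating'' configuration attains the baseline without resolving every conflict. Once P2B$(G,k)$ is established NP-complete, the NP-completeness of MEIS$(W,k)$ on a $2$-regular set follows immediately: the rephrasing preceding this lemma is a polynomial-time, solution-preserving equivalence between the two decision problems, so the hardness transfers directly.
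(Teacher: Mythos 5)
There is a genuine gap: your argument is a reduction \emph{plan}, not a reduction. The entire construction hinges on the ``rigid hub gadget'' that forces a designated vertex to degree at least $3$ in every optimal packing, and you never exhibit it --- indeed you explicitly identify its construction as the main obstacle and leave it unresolved. Your own caveat is the problem: leaves and subdivision vertices satisfy the partial $2$ bound for free, so until a concrete gadget is written down and its local optimal packings are enumerated, there is no way to check that the ``forced baseline count'' is well defined, that saturation actually occurs in \emph{every} optimum rather than merely in some, or that the clause/conflict accounting closes. The structural observation you lean on (no retained edge may join two vertices of degree $\ge 3$) is correct, but by itself it does not yield hardness.

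The paper avoids all of this with a much lighter reduction that you may want to compare against: it reduces from P1B$(G,k)$, already shown NP-complete via its equivalence with dominating set, by adding a single universal vertex $u$ adjacent to every vertex of $G$. This raises every original vertex's degree by exactly one, so a $k$-edge partial $1$ bounded subgraph of $G$ corresponds to an $(n+k)$-edge partial $2$ bounded subgraph of $G'$; the only work in the converse direction is a short exchange argument showing one may assume all $n$ new edges are retained. No gadgets, no truth assignments, no case analysis over local packing states. Your closing paragraph --- that NP-completeness of MEIS$(W,k)$ on $2$-regular sets follows from the edge/string rephrasing --- is fine and matches the paper, but it rests on a hardness proof you have not actually supplied.
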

\begin{proof}
This problem is trivial to be in NP. We show its NP-completeness via
a reduction from $P1B(G,k)$. Given a $P1B(G,k)$ instance
$G(V,E)$, $n=|V|$, we construct a $P2B(G',n+k)$ instance $G'(V',E')$ as follows. Adding a distinguished vertex $u$ into $V$, i.e., $V'=V\cup \{u\}$ and $E'=E\cup\{(u,v)|v \in V\}$.

\emph{Necessity}: If $M$ is a $k$ sized partial 1 bounded subgraph in $G$, then adding the $n$ additional edges, i.e.,
$E'\setminus E$ into $M$ will produce a $n+k$ sized partial 2 bounded subgraph $M'$.

\emph{Sufficiency}: Let $M'$ be a $n+k$ sized partial 2 bounded subgraph in $G'$, and let $d_{v}^{M'}$ be the degree of $v$ in $M'$. We prove it case by case. \textbf{Case 1:} When $(E'\setminus E) \subseteq E_{M'}$, then deleting all the $n$ additional edges will make each node's degree in $M'$ decrease 1, thus the remaining subgraph is a $k$ sized partial 1 bounded subgraph. \textbf{Case 2:} When there exists an edge $(u,v_i)\not\in E_{M'}$, if $d_{v_i}^{M'} < 2$, then we could just replace an arbitrarily edge $(v_j, v_k) \in E_{M'}$ by $(u,v_i)$. If $d_{v_i}^{M'} \ge 2$, let $(v_i, v_j)\in E_{M'}$, then we could replace $(v_i, v_j)$ by $(u,v_i)$. Repeating this swap, we eventually arrive at a $n+k$ sized partial 2 bounded subgraph $M'$ which contains all the $n$ addition edges, that is Case 1. 
\end{proof}

\begin{theorem}
P$c$B$(G,k)$ is NP-complete.
\end{theorem}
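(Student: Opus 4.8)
The plan is to fix the constant $c \ge 2$ (the case $c=1$ being Lemma~\ref{lem:PartialOne}) and reduce from $P1B$. Membership in NP is immediate: given a candidate subgraph we verify the partial $c$ bounded property in $O(|E|)$ time, exactly as for $P1B$ and $P2B$. For hardness, given a $P1B$ instance $G(V,E)$ with $n=|V|$ and parameter $k$, I would build $G'(V',E')$ by attaching to every vertex $v \in V$ a bundle of $c-1$ fresh pendant vertices $p_{v,1},\dots,p_{v,c-1}$, each joined to $v$ only; so $V' = V \cup \{p_{v,j}\}$ and $E' = E \cup \{(v,p_{v,j})\}$, adding exactly $(c-1)n$ pendant edges. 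Since $c$ is a fixed constant this is a polynomial construction. The target is $k' = (c-1)n + k$. Because deleting edges never destroys the partial $c$ bounded property, the ``at least $k'$ edges'' and ``exactly $k'$ edges'' decision versions coincide, so it suffices to relate the two optima.

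The forward direction is the easy half and drives the choice of gadget. Starting from a $P1B$ subgraph $M$ of $G$ with $k$ edges, I would set $M'$ to be $M$ together with all $(c-1)n$ pendant edges. Every pendant edge $(v,p_{v,j})$ is automatically safe, since its leaf endpoint has degree $1 \le c$. For an original edge $(v_i,v_j)\in M$, inserting all pendants raises each endpoint's degree by exactly $c-1$, i.e. $d^{M'}_{v}=(c-1)+d^{M}_{v}$; hence $\min(d^{M'}_{v_i},d^{M'}_{v_j}) = (c-1)+\min(d^{M}_{v_i},d^{M}_{v_j}) \le (c-1)+1 = c$, using that $M$ is $P1B$. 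Thus $M'$ is partial $c$ bounded with $(c-1)n+k$ edges. The same uniform degree shift, read backwards, turns the $P1B$ constraint $\min \le 1$ into the $PcB$ constraint $\min \le c$ and back, which is precisely the point of padding by $c-1$ leaves.

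The backward direction is where the real work lies, and I expect it to be the main obstacle. A $PcB$ subgraph $M'$ of $G'$ with $(c-1)n+k$ edges certainly contains at least $k$ original edges by counting (there are only $(c-1)n$ pendants), but $M' \cap E$ need not be $P1B$: the degree-shift argument runs in reverse only if every pendant edge is actually present to use up the degree budget. I would therefore prove a normalization claim: there is a maximum-edge $PcB$ subgraph $M^*$ of $G'$ containing all $(c-1)n$ pendant edges. Choose $M^*$ to be maximum and, among those, to contain the most pendant edges; if some pendant $(v,p)$ were missing, then $p$ is isolated, and I split on $d^{M^*}_v$. If $d^{M^*}_v \le c-1$, simply adding $(v,p)$ keeps $d_v \le c$ and preserves the property while increasing the edge count, contradicting maximality. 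If $d^{M^*}_v \ge c$, then $v$ carries at least one original edge $(v,w)$ (the $c-1$ pendants alone cannot reach degree $c$); deleting $(v,w)$ and inserting $(v,p)$ leaves $d_v$ unchanged, only lowers $d_w$, and adds a safe leaf edge, so the result is still $PcB$ with the same maximum size but one more pendant edge, contradicting the second choice. The delicate point is exactly this swap: it works because replacing an original edge at $v$ by a pendant at $v$ preserves $d_v$ and degrees elsewhere only drop, so no partial $c$ bounded constraint can be newly violated.

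Once the claim is in hand the reduction closes cleanly. With all pendants in $M^*$, the uniform shift $d^{M^*}_v = (c-1)+d^{M^* \cap E}_v$ converts the $PcB$ property of $M^*$ into the $P1B$ property of $M^* \cap E$, and $|M^* \cap E| = |M^*| - (c-1)n$. Hence the maximum $PcB$ size on $G'$ equals $(c-1)n$ plus the maximum $P1B$ size on $G$, so $G'$ admits a $PcB$ subgraph with at least $(c-1)n+k$ edges if and only if $G$ admits a $P1B$ subgraph with at least $k$ edges. Combined with NP membership, this establishes that P$c$B$(G,k)$ is NP-complete for every fixed $c$.
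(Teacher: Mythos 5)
Your proof is correct, but your gadget is genuinely different from the paper's, and the difference matters. The paper adds $c-1$ \emph{universal} apex vertices, each joined to every vertex of $V$, and claims that a $k$-edge P1B subgraph of $G$ plus all $(c-1)n$ apex edges is a $((c-1)n+k)$-edge P$c$B subgraph of $G'$; you instead attach $c-1$ \emph{private} pendant leaves to each vertex. The edge counts and the target $k'=(c-1)n+k$ are identical, but the correctness arguments diverge. In your construction every added edge is unconditionally safe because its leaf endpoint has degree $1\le c$, so the forward direction is the clean uniform shift $d^{M'}_v=d^M_v+(c-1)$, and the backward direction needs only the pendant-normalization swap you give, which is sound (the swap preserves $d_v$, only lowers $d_w$, and the new edge is safe at its leaf). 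In the paper's construction an apex vertex has degree up to $n>c$, so an apex edge $(u_i,v)$ is safe only if $d_v\le c$; this fails in the forward direction whenever $v$ is the center of a star of $M$ with at least two leaves. Already for $c=2$, take $G=K_{1,3}$ with $M=G$ and $k=3$: the augmented graph $G'$ has exactly $n+k=7$ edges, but the edge from the apex to the star center has both endpoints of degree $4>2$, so P2B$(G',7)$ is a NO instance while P1B$(G,3)$ is a YES instance, contradicting the claimed equivalence in Lemma \ref{lem:P2BNPC} and its generalization. So your per-vertex pendant bundles do not merely give an alternative proof; they repair a gap in the paper's apex construction, at the modest and correctly-paid cost of showing that some optimal P$c$B subgraph of $G'$ contains all pendant edges.
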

\begin{proof}
We can easily generalize the proof technique of \label{lem:P2BNPC} to any parameter $c$. Given a P1B$(G,k)$ instance $G(V,E)$, $n=|V|$, we add $c-1$ additional vertices and $(c-1)|V|$ edges connecting each of the vertex in $V$ to each additional vertex, resulting a graph $G'$. Then P1B$(G,k)$ is a YES instance if and only if P$c$B$(G',k+(c-1)|V|)$ is a YES instance. Note that, it is trivial to select all the edges when $c \ge \Delta(G)$, where $\Delta(G)$ stands for the maximum degree of $G$.
\end{proof}

It is clear that P$c$B$(G,k)$ is a case of forbidden subgraph of $G$, which asks to find a maximum subgraph $G'$ of $G$ so that $G'$ does contain a subgraph which is isomorphic with the forbidden $H$. Here $H$ is a tree with 2 internal nodes with degree $c+1$ and $2c$ leaf nodes. Each internal node has incident edges to $c$ leaf nodes and the other internal node.

\section{Maximum partial $c$ bounded subgraph problem on tree}\label{sec:PcBonTree}
Due to the NP-hardness of P2B on the general graph, we would first consider it on some restricted structures, such as tree. In the
scenario of Maximum Expressible Independent Subset, this corresponds to restricted instances where for any subset $A \subseteq W$, $|\bigcup_{x\in A}{\mathsf{One}(x)}| > |A|$.

\begin{lemma}
$PcB(G,k)$ is solvable in $O(n^2)$ for any parameter $c$ via a dynamic programming, where $n=|V|$.
\end{lemma}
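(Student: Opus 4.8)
The plan is to root the tree $G$ at an arbitrary vertex $r$ and process the vertices in post-order, building a dynamic program over rooted subtrees. The key reformulation is that a subgraph is P$c$B if and only if it contains no edge \emph{both} of whose endpoints have degree at least $c+1$; call a vertex \emph{saturated} if its degree in the chosen subgraph is $\ge c+1$ and \emph{free} otherwise, so the unique forbidden configuration is a selected edge joining two saturated vertices. For the edge $e=(v,p(v))$ linking $v$ to its parent, whether $e$ may be selected depends only on the final statuses of $v$ and $p(v)$, which motivates carrying each vertex's status in the state. For every vertex $v$ I would define $D(v,\delta,\sigma)$ as the maximum number of selected edges lying strictly inside the subtree rooted at $v$, where $\delta\in\{0,1\}$ records whether the edge to $p(v)$ is selected (and hence contributes $1$ to $d_v$) and $\sigma\in\{\mathrm{free},\mathrm{sat}\}$ records the final status of $v$ once all incident edges, including the pending parent edge, are fixed.

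The merge at a vertex $v$ with children $w_1,\dots,w_t$ is where the work lies. For each child $w_i$ I already have the quantities $D(w_i,1,\cdot)$ (edge $(v,w_i)$ selected) and $D(w_i,0,\cdot)$ (edge not selected). I would then run a small inner DP over the children that accumulates, as a second index, the number of selected child edges, capped at $c+1$ since any larger value is equivalent for deciding $v$'s status; selecting the edge $(v,w_i)$ adds one to the objective and one to this counter, while not selecting it adds neither. Two cases govern the admissible child options. If $v$ is to be \emph{free} (final degree $\le c$), then every selected child edge is automatically legal at $v$'s end, so the child may be free or saturated and I use $\max_\tau D(w_i,1,\tau)$; the admissible configurations are exactly those in which the number of selected child edges plus $\delta$ is at most $c$. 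If $v$ is to be \emph{saturated} (final degree $\ge c+1$), then each selected child edge now forces its child endpoint to be free, so only $D(w_i,1,\mathrm{free})$ is admissible, and the counter plus $\delta$ must be at least $c+1$. In both cases a child whose edge is not selected contributes $\max_\tau D(w_i,0,\tau)$. Reading off the results of the inner DP gives the four values $D(v,\delta,\sigma)$. This case split is precisely what excludes the pair (saturated, saturated) across an edge while permitting (free, free), (free, saturated) and (saturated, free).

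The base case is a leaf, whose subtree has no internal edges and whose status is determined solely by $\delta$ (free iff $\delta\le c$); infeasible status/$\delta$ combinations are set to $-\infty$. The optimum is read off at the root as $\max_\sigma D(r,0,\sigma)$, and $PcB(G,k)$ is a YES instance precisely when this value is at least $k$. For the running time, the inner DP at $v$ merges its $\deg(v)$ children while maintaining a table of selected-edge counts truncated at $c+1$, so each child costs $O(\min(c+1,\deg(v)))$ and the vertex costs $O(\deg(v)\cdot\min(c,n))$; summing over $\sum_v \deg(v)=O(n)$ yields the claimed $O(n^2)$ bound (indeed $O(cn)$). The main obstacle I anticipate is purely in the bookkeeping of the status constraint across a selected edge, namely ensuring that the ``saturated forces the neighbour free'' rule is applied on exactly one side of each edge and stays consistent with the $\ge c+1$ versus $\le c$ counting, together with the correct accounting of the parent edge's contribution to $d_v$; once these invariants are pinned down, optimality follows by a routine exchange argument over subtrees.
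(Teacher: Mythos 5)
Your proposal is correct and follows essentially the same strategy as the paper: a bottom-up tree DP whose states record how the current vertex's degree compares to $c$, combined with a knapsack-like inner DP over the children that restricts which child states may be attached by a selected edge (your free/saturated dichotomy plays the role of the paper's neighbour count $q$ and its $g_1,g_2,g_3$ trichotomy). If anything, your bookkeeping is slightly cleaner, since you condition the admissibility of each child edge on the \emph{final} status of $v$ rather than on the running count at the moment that child is merged, which is exactly the point where the edge-validity check must be made.
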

\begin{proof}
The sketch of the algorithm is bottom-up for the tree as a whole, and left to right knapsack like dynamic programming for selecting a vertex's children. Let $T(V,E)$ denote the tree, and
$v_1,\cdots,v_n$ be a breadth first ordering of vertices of $V$.
Further, let $d_u'$ be the number of $u$'s children and let $T(u, i)$
be the subtree induced by $u$, $u$'s first $i$ children and
all their descendants. Thus, $T(u, d_u')$ is the subtree rooted at
$u$ and $T(u,0)$ contains $u$ alone. Let $f(u,i,q)$ denote the maximum P$c$B subtree in $T(u,i)$ under the condition that $u$ has $q$ neighbors in $T(u,i)$. Let $g_1(u)$ denote the maximum P$c$B subtree in $T(u,d_u')$ under the condition that $u$ has less than $c$ neighbors, and $g_2(u)$ for exactly $c$ neighbors and $g_3(u)$ for more than $c$ neighbors respectively. So only $g_1(u)$ and $g_3(u)$ can be extended to have an edge connecting to $u$'s parent when we are working upward. For simplicity, we abuse $f(u,i,q)$ and $g_i(u)$ to denote their edge cardinalities. Let $\mathsf{MAX}\{\cdots , a_i, \cdots \} = \arg max_i \{a_i\}$.

\begin{algorithm}
\caption{Solving Partial $c$ Bounded Subgraph on Trees}
\begin{algorithmic}[1]
\ForAll{$u$, $u$ is a leaf}

\State {$f(u,0,0)=0$}

\EndFor

\ForAll{$u$, all subtrees rooted at $u$'s children have been calculated}

    \ForAll{ $i$ from $1$ to $d_u'$}

        \State {Let $v$ be the $i^{th}$ child of $u$ counting from left to right}

        \ForAll{ $q$ from $1$ to $i$ }

            \State {$f(u,i,q)= f(u,i-1,q)+ \mathsf{MAX}\{ g_1(v), g_2(v), g_3(v)$ \} }

            \If { $q \le c$ }

            \State { $f(u,i,q) = \mathsf{MAX} \{ f(u,i,q), f(u,i-1,q-1) + 1 +\mathsf{MAX}\{ g_1(v),g_3(v) \} \}$ }

            \Else

            \State { $f(u,i,q) = \mathsf{MAX} \{ f(u,i,q), f(u,i-1,q-1) + 1 + g_1(v) \}$ }

            \EndIf

        \EndFor

        \State { update $g_1(u)$, $g_2(u)$ and $g_3(u)$ }

    \EndFor

\EndFor

\end{algorithmic}
\end{algorithm}

The algorithm above is correct because it \emph{enumerates} every possible
edge selection by a knapsack like way. Lines 1-2 take $O(|V|)$. It is important to note that lines
3-12 take only $\Sigma (d_i')^2 \le (\Sigma d_i')^2 \le (2|V|)^2$. So the running
time of the algorithm is $O(|V|^2)$.
\end{proof}

\section{Approximation algorithms for P1B and P2B}\label{sec:approxAlgs}
In this section, we are going to present two approximation algorithms. The first one for partial 1 bounded subgraph runs in $O(|E|)$ with approximation ratio 2, and the one for partial 2 bounded runs in $O(|V|)$ with ratio $\frac{32}{11}$ in expectation. In analyzing both algorithms, we only use upper bounds of the optimum solutions, without exploring deep relationships between the optimum and the solution our algorithm returned.

\subsection{A 2 ratio approximation algorithm for partial 1 bounded subgraph}
Given a graph $G(V,E)$, we first greedily calculate a dominating set with no more than $|V|/2$ vertices and then construct a partial 1 bounded subgraph $M$ with no less than $|V|/2$ edges. Because the maximum P1B subgraph of $G$ has less than $|V|$ edges, $M$ is a 2 ratio approximation solution. The process is shown in algorithm \ref{alg:approxPartial1}.

\begin{algorithm}
\caption{Approximation Algorithm for maximum partial 1 bounded subgraph of $G(V,E)$}\label{alg:approxPartial1}
\begin{algorithmic}[1]
\State{$D\subseteq V$, $A\subseteq V$, initiate $A=D=\emptyset$ }

\ForAll{$u \in V$}
\If{$u$ is not dominated by any vertex $v\in D$}
\State{$D = D\bigcup \{u\}$}
\EndIf
\EndFor

\If{$|D| > |V|/2$}
\State{$D=V\setminus D$}
\EndIf

\ForAll{ $u \in D$}
\ForAll{edge $(u,v) \in E$}
\If{ $v \in D$ or $v \in A $}
\State{delete $(u,v)$}
\EndIf
\If{ $v \not\in D$ and $v \not\in A $ }
\State{ $A=A\bigcup \{v\}$ }
\EndIf
\EndFor
\EndFor

\State{$G$ is a P1B graph}
\end{algorithmic}
\end{algorithm}

Lines 1-4 in algorithm \ref{alg:approxPartial1} obtains a minimal dominating set (DS) $D$ in $O(|E|)$. Then $V\setminus D$ is also a minimal DS and lines 5-6 obtains a minimal DS with no more than half vertices. Lines 7-12 obtains a P1B subgraph of $G$, this is proved in lemma \ref{lem:PartialOne}.

\subsection{ A 32/11 ratio approximation algorithm for partial 2 bounded subgraph}
We first present an upper bound of the optimum value and then give a randomized algorithm with expectation larger than $\frac{11}{32}$ times the upper bound. Eventually we show the process of derandomization. Let $N(u)=\{v | (u,v) \in E\}$ denote the neighbors of $u$ in graph $G$. Also, if $A$ is a set of vertices, then let $N(A)=\bigcup_{u \in A}{N(u)}$. In the sequel, $n=|V|$.

\begin{lemma}\label{lem:d_uOrd_v>c}
If $G(V,E)$ is a maximum partial $c$ bounded graph on $n$ vertices, then \\ $\forall_{(u,v)\in E}{(d_u \ge c \bigwedge d_v \ge c)}$.
\end{lemma}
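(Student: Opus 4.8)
The plan is to argue by contradiction. Suppose $G$ is a maximum P$c$B graph on $n$ vertices, with $n$ large (the regime in which this upper bound is applied), yet some edge $(u,v)\in E$ has $d_u < c$; I will exhibit a P$c$B graph on the same vertex set with strictly more edges. The key tool is a single-edge insertion at the deficient vertex $u$. If $w$ is any non-neighbour of $u$, inserting $(u,w)$ keeps every edge incident to $u$ legal, since $u$'s new degree is $d_u+1\le c$, so only edges incident to $w$ can be spoiled. A short check on the value of $d_w$ shows the insertion is safe whenever $d_w\neq c$: if $d_w<c$ then $w$'s degree stays at most $c$, and if $d_w>c$ then $w$ was already heavy, so P$c$B already forced every neighbour of $w$ to have degree at most $c$, a property preserved by the insertion. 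Hence a safe insertion — and thus a P$c$B graph with one extra edge — exists, contradicting maximality, \emph{unless} every non-neighbour $w$ of $u$ satisfies $d_w=c$ and has a neighbour of degree greater than $c$.

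This residual configuration is the real obstacle, because there no local insertion is available. I would resolve it by a global edge count. Let $H=\set{x\in V: d_x>c}$ be the heavy vertices. A heavy non-neighbour of $u$ would be a non-neighbour of degree $>c$, contradicting the residual hypothesis; hence $H\subseteq N(u)$, and therefore $|H|\le d_u\le c-1$. Since every vertex outside $H$ has degree at most $c$ by definition, summing degrees gives the crude bound $2|E|\le |H|(n-1)+c(n-|H|)$, whose right-hand side is maximised at $|H|=c-1$ in the relevant range of $n$.

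To convert this into a contradiction I would compare $G$ with the complete bipartite graph $K_{c,n-c}$ on the same $n$ vertices: each of its edges has an endpoint of degree $c$, so it is P$c$B, and it has $c(n-c)$ edges. Substituting $|H|\le c-1$ and simplifying yields $2|E|-2c(n-c)\le c^2+1-n$, so for $n>c^2+1$ we obtain $|E|<c(n-c)$. Thus $G$ would have strictly fewer edges than a bona fide P$c$B graph on the same vertices, contradicting maximality and ruling out the residual configuration. Hence no edge of a maximum P$c$B graph has an endpoint of degree below $c$, which is the claim.

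The delicate point, and the reason I reach for a global count rather than a rewiring, is the gap between ``a move exists'' and ``\emph{strict} improvement''. In the residual case one does have a local swap — delete an edge from $w$ to a heavy neighbour and insert $(u,w)$ — but it merely preserves the edge count, so it would only show that \emph{some} maximum graph meets the degree bound, not the stated claim about \emph{every} maximum graph. Forcing a strict inequality is exactly what pins down all maxima, at the cost of the mild hypothesis $n>c^2+1$; this size restriction is essential, since for $n\le c$ the maximum P$c$B graph is $K_n$, in which every degree equals $n-1<c$. I would therefore state the lemma under the standing assumption that $n$ is large, or sharpen the degree estimate on $H$ to push the threshold down.
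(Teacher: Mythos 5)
Your proof is correct under your explicit hypothesis $n>c^2+1$, and it diverges from the paper's argument exactly where the paper is weakest. The paper also opens with a local edge addition at the deficient endpoint: it sets $X=V\setminus N(u,v)$ and adds $(u,x)$ for some $x\in X$, splitting into Case 1 ($d_x>c$) and Case 2 ($d_x\le c$). Your safe-insertion analysis agrees with the paper's Case 1 and with the subcase $d_x<c$, but the paper's Case 2 silently contains the configuration you isolate as the residual obstacle --- $d_x=c$ with a heavy neighbour $y$ --- where adding $(u,x)$ pushes $d_x$ to $c+1$ and breaks the edge $(x,y)$, so the ``same edge addition as in Case 1'' is not valid there and the paper offers no repair. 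Your global count ($H\subseteq N(u)$, hence $|H|\le c-1$, hence $2|E|\le (c-1)(n-1)+c(n-c+1)=2cn-n-c^2+1<2c(n-c)$ once $n>c^2+1$, so $G$ is beaten by the P$c$B graph $K_{c,n-c}$) genuinely fills this hole, at the cost of a largeness hypothesis --- but that hypothesis is not an artifact of your method: as you observe, the lemma as literally stated fails for small $n$ (for $n\le c$ the maximum P$c$B graph is $K_n$, in which every degree is $n-1<c$), and the paper glosses over the same issue with the unjustified assertion that $X\neq\emptyset$ because ``$u$ and $v$ can have at most $2c-2$ neighbors,'' even though nothing at that point bounds $d_v$. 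Your closing remark about swaps versus strict improvement is also on target: the paper defers the tight case $d_u=d_v=c$ to its next lemma, where it uses the two-for-one augmentation $E=(E\setminus\{(u,v)\})\cup\{(u,x),(v,x)\}$ to force a strict gain, whereas the edge-count-preserving exchange available in your residual case would only show that \emph{some} maximum graph has the property; your strict global comparison achieves the same end for \emph{every} maximum graph by a different route. In short: same opening move, but where the paper asserts validity of an addition that can fail, you identify the failure mode precisely and close it with a counting argument, making your proof more rigorous than the paper's own, with the trade-off of an explicit (and, as you show, necessary) restriction on $n$.
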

\begin{proof}
If there is an edge $(u,v)$ dissatisfies $(d_u \ge c \bigwedge d_v \ge c)$, i.e., $(d_u < c \bigvee d_v <c)$, we assume $d_u < c$. Let $X=V\setminus N(u,v)$, then $X\neq \emptyset$ because $u$ and $v$ can have at most $2c-2$ neighbors. Otherwise, we can construct a graph with more edges. We do case by case proof as follows. \textbf{Case 1:} If $\exists_{x \in X}{d_x > c}$, then we apply an edge addition as $E=E \bigcup \{(u,x)\}$. This edge addition preserves $G$'s property as a P$c$B, we call it \emph{valid}. \textbf{Case 2:} If $\forall_{x \in X}{d_x \le c}$, then $d_x \le c$, we apply the same edge addition as in Case 1. Both edge additions contradict the fact that $G$ is a maximum P$c$B. So the lemma holds.
\end{proof}

According to lemma \ref{lem:d_uOrd_v>c}, $\forall_{u\in V}{(d_u \ge c)}$. By definition of P$c$B, $\forall_{(u,v)\in E}{(d_u \le c \bigwedge d_v \le c )}$, then at least one endpoint has degree $c$. So the following corollary is correct.
\begin{corollary}\label{cor:d_uOrd_v=c}
If $G(V,E)$ is a maximum partial $c$ bounded graph on $n$ vertices, then \\ $\forall_{(u,v)\in E}{(d_u = c \bigvee d_v = c)}$.
\end{corollary}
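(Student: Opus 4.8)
The plan is to obtain this corollary directly from Lemma \ref{lem:d_uOrd_v>c} together with the defining inequality of a partial $c$ bounded graph, so that no new extremal construction is required; the whole argument is a sandwiching of one endpoint's degree between a lower and an upper bound of the same value $c$.

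First I would invoke Lemma \ref{lem:d_uOrd_v>c}, which says that in a maximum P$c$B graph every edge $(u,v) \in E$ satisfies $d_u \ge c \bigwedge d_v \ge c$; in particular, reading this over all edges and using our standing assumption that $G$ has no isolated vertex, every vertex carries degree at least $c$. Second I would recall the definition of P$c$B itself: every edge $(u,v) \in E$ has at least one endpoint of degree at most $c$, that is $d_u \le c \bigvee d_v \le c$. The final step is to intersect these two facts edge by edge. Fix an arbitrary edge $(u,v)$; by the definition we may assume without loss of generality that $d_u \le c$, and by the lemma the same vertex satisfies $d_u \ge c$, so $d_u = c$. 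Hence at least one endpoint of every edge has degree exactly $c$, which is precisely the asserted $d_u = c \bigvee d_v = c$.

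I do not expect a genuine obstacle in this corollary, since it is a bookkeeping consequence of the preceding lemma rather than an independent combinatorial fact. The only point that deserves to be stated carefully—and the only place where a careless argument could slip—is that the two inequalities must be applied to the \emph{same} endpoint: the P$c$B definition designates some endpoint whose degree is at most $c$, and Lemma \ref{lem:d_uOrd_v>c} guarantees that that very endpoint also has degree at least $c$, which is what forces the equality. Because the lower bound and the upper bound coincide at $c$ and land on a common vertex, the conclusion follows with no case analysis beyond the symmetric choice of which endpoint plays the role of $u$.
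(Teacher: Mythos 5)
Your proof is correct and matches the paper's own justification, which appears in the short paragraph immediately preceding the corollary: Lemma \ref{lem:d_uOrd_v>c} gives $d_u \ge c$ for every vertex, the P$c$B definition gives at least one endpoint of each edge with degree at most $c$, and sandwiching that endpoint yields degree exactly $c$. Your added care about applying both bounds to the same endpoint is a reasonable clarification of the same argument, not a different route.
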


\begin{lemma}\label{lem:d_u=cANDd_v>c}
If $G(V,E)$ is a maximum partial $c$ bounded graph on $n$ vertices, then \\ $\forall_{(u,v)\in E}{(d_u = c \bigwedge d_v > c)}$.
\end{lemma}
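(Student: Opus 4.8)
The plan is to argue by contradiction, refining the exchange idea used in Lemma~\ref{lem:d_uOrd_v>c}. By Corollary~\ref{cor:d_uOrd_v=c} every edge of a maximum graph already has at least one endpoint of degree exactly $c$, so the statement can only fail through an edge $(u,v)$ with $d_u=d_v=c$; I would assume such an edge exists and produce a partial $c$ bounded graph on the same $n$ vertices with strictly more edges, contradicting maximality. Throughout, write $L=\{w : d_w=c\}$ for the \emph{low} vertices and $H=\{w : d_w>c\}$ for the \emph{high} ones, so that $V=L\cup H$ by Lemma~\ref{lem:d_uOrd_v>c} (no vertex has degree below $c$).

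The enabling observation is that a high vertex is adjacent only to low vertices: for $w\in H$ and any edge $(w,z)$, Corollary~\ref{cor:d_uOrd_v=c} forces the degree-$c$ endpoint to be $z$, since $d_w\neq c$; hence $N(w)\subseteq L$. This makes a reroute safe. I would delete $(u,v)$, which lowers $d_u$ and $d_v$ to $c-1$, and then reattach $u$ and $v$ to high vertices $h_1\in H\setminus N(u)$ and $h_2\in H\setminus N(v)$ by inserting $(u,h_1)$ and $(v,h_2)$ (taking $h_1=h_2$ if a common non-neighbour is available). After the swap $d_u=d_v=c$ again, so every edge at $u$ or $v$ is valid through its low endpoint, while every edge at $h_1,h_2$ stays valid because their former neighbours all lie in $L$ and the freshly attached $u,v$ again have degree $c$. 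The result is a partial $c$ bounded graph with $|E|+1$ edges, the desired contradiction.

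The step I expect to be the genuine obstacle is guaranteeing the attachment targets exist, i.e. that $H$ is not entirely swallowed by $N(u)\cup N(v)$; since $u$ has only $c-1$ neighbours besides $v$, it suffices to know $|H|\ge c$. Rather than chase this adjacency bookkeeping directly, I would close the gap — and in fact obtain the whole proof — through a counting identity. Because no edge joins two high vertices, summing degrees over $L$ gives $c|L|=2e_{LL}+e_{LH}$, where $e_{LL}$ and $e_{LH}$ count the $L$--$L$ and $L$--$H$ edges, so
\[ |E| = e_{LL}+e_{LH} = c|L|-e_{LL} = c(n-|H|)-e_{LL}. \]
Comparing with the complete bipartite graph $K_{c,\,n-c}$, which is partial $c$ bounded and has $c(n-c)$ edges, maximality of $G$ forces $c(n-|H|)-e_{LL}\ge c(n-c)$, that is $e_{LL}\le c(c-|H|)$. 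When $|H|\ge c$ this already gives $e_{LL}=0$, contradicting $e_{LL}\ge1$; and when $|H|<c$, the capacity bound $e_{LH}=\sum_{h\in H}d_h\le |H|(n-|H|)$ rewrites as $(n-|H|)(c-|H|)\le 2e_{LL}\le 2c(c-|H|)$, forcing $n\le 2c+|H|<3c$.

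Thus the main delicate point is quantitative: the argument needs $n$ not too small (here $n\ge 3c$ is enough to rule out both cases), and the statement genuinely fails for tiny $n$ — for instance a single edge already saturates the extremal count when $n=2,\,c=1$, with both endpoints of degree $c$. Making the dependence on $n$ explicit, and confirming that $K_{c,\,n-c}$ is indeed an admissible competitor in the relevant range, is the part I would treat carefully; the local exchange above is otherwise routine and mainly serves as intuition for why an $L$--$L$ edge is always wasteful.
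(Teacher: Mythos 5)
Your argument is correct where it concludes, but it takes a genuinely different route from the paper's. The paper argues locally: assuming an edge $(u,v)$ with $d_u=d_v=c$, it picks a vertex $x\in X=V\setminus N(u,v)$ and replaces $(u,v)$ by the two edges $(u,x),(v,x)$, gaining one edge and contradicting maximality. Your proof is global: you split $V$ into $L$ (degree exactly $c$) and $H$ (degree $>c$), observe $e_{HH}=0$, derive $|E|=c(n-|H|)-e_{LL}$, and play this off against the explicit competitor $K_{c,\,n-c}$ to force $e_{LL}=0$ when $|H|\ge c$ and $n<3c$ when $|H|<c$. What your route buys is completeness: the paper's exchange silently requires $X\neq\emptyset$, and in its Case 2 (where the chosen $x$ has $d_x=c$) the augmentation raises $d_x$ above $c$ and can therefore destroy the P$c$B property at a pre-existing edge $(x,y)$ with $d_y>c$ --- a gap your counting sidesteps entirely. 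You have also correctly detected that the lemma as stated is false for small $n$: your $K_2$ with $c=1$ is a genuine counterexample, and for $c=2$, $n=5$ so is the graph consisting of a dominating vertex plus a perfect matching on the remaining four vertices (six edges, matching the bound $c(n-c)$, yet with edges joining two degree-$2$ vertices). So a hypothesis such as $n\ge 3c$ is truly needed, and the paper never makes it explicit. The only thing your approach costs is that it must know the extremal value $c(n-c)$ is achievable in advance, which $K_{c,\,n-c}$ supplies and which does not circularly depend on Theorem \ref{thm:PcBupperbound}. Net assessment: your argument is more rigorous than the paper's for $n\ge 3c$, and your residual worry about small $n$ is not a defect of your proof but of the statement of Lemma \ref{lem:d_u=cANDd_v>c} itself.
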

\begin{proof}
If there is an edge $(u,v)$ dissatisfies $(d_u = c \bigwedge d_v > c)$, then according to lemma \ref{lem:d_uOrd_v>c} and corrolary \ref{cor:d_uOrd_v=c}, the only possibility is $(d_u=c \bigwedge d_v =c)$. Let $X=V\setminus N(u,v)$, then $X\neq \emptyset$. We do case by case proof as follows. \textbf{Case 1:} If $\exists_{x \in X}{d_x > c}$, then we can apply an \emph{valid edge augmentation} as $E = (E\setminus \{(u,v)\}) \bigcup \{(u,x),(v,x)\}$.  \textbf{Case 2:} If $\forall_{x\in X}{d_x=c}$, then we can choose an $x$ arbitrarily and apply the same \emph{valid edge augmentation} as in Case 1. However, both edge augmentations contradict that $G$ is a maximum P$c$B. So the lemma holds.
\end{proof}

\begin{theorem}\label{thm:PcBupperbound}
For any partial $c$ bounded graph $G(V,E)$, $|E| \le c\bm\cdot(|V|-c)$.
\end{theorem}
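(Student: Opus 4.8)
The plan is to reduce to the edge-maximal case and then read the bound directly off the structural lemmas already proved. Since any P$c$B graph on $n$ vertices has no more edges than an edge-maximum P$c$B graph on the same $n$ vertices, it suffices to establish $|E| \le c(n-c)$ when $G(V,E)$ is itself a maximum partial $c$ bounded graph; the statement for an arbitrary P$c$B graph then follows immediately by comparison.

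So first I would assume $G$ is maximum and invoke Lemma \ref{lem:d_u=cANDd_v>c}: every edge $(u,v)$ has one endpoint of degree exactly $c$ and the other of degree strictly greater than $c$. This lets me partition $V$ into $L = \{v : d_v = c\}$ and $H = \{v : d_v > c\}$; using the remark that $\forall u,\ d_u \ge c$ (a consequence of Lemma \ref{lem:d_uOrd_v>c}), every vertex lies in $L$ or $H$, so $|L| + |H| = n$. The same lemma forbids edges inside $L$ (both endpoints would have degree $c$) and inside $H$ (both would exceed $c$), so $G$ is bipartite with every edge running between $L$ and $H$. Counting the edges from the $L$-side, where each vertex is incident to exactly $c$ edges and each edge is counted exactly once (it has a unique $L$-endpoint), gives the clean identity $|E| = c\,|L|$.

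It then remains to bound $|L|$, equivalently to show $|H| \ge c$. If $L = \emptyset$ then $G$ has no edges at all (every edge needs an $L$-endpoint) and the inequality holds trivially; otherwise pick any $u \in L$. Its $c$ neighbors are distinct and all lie in $H$, so $|H| \ge c$. Substituting $|L| = n - |H| \le n - c$ into $|E| = c\,|L|$ yields $|E| \le c(n-c)$, as claimed.

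The step I expect to be the genuine obstacle is not the counting but certifying that the structural lemmas actually apply here, i.e.\ that the edge-augmentation arguments behind Lemma \ref{lem:d_u=cANDd_v>c} go through. Those arguments need a spare vertex $x \notin N(u,v)$ onto which edges can be rerouted, which silently requires $n$ to exceed the number of vertices incident to a single edge (roughly $n \ge 2c$). For small $n$ the bipartite decomposition fails and the bound can too: the complete graph $K_{c+1}$ is P$c$B, has $\binom{c+1}{2} = \tfrac{c(c+1)}{2}$ edges on $n = c+1$ vertices, yet $c(n-c) = c$, so the inequality is violated once $c \ge 2$. Hence the statement should be read for $n$ large enough that the maximum-graph characterization holds; once that bipartite decomposition is secured, the remaining argument is immediate.
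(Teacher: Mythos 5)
Your argument is essentially the paper's own proof: reduce to an edge-maximum P$c$B graph, invoke Lemma \ref{lem:d_u=cANDd_v>c} to split the vertices into those of degree exactly $c$ and those of degree greater than $c$, observe that every edge has exactly one endpoint in the first class so $|E|$ equals $c$ times its size, and bound that size by $n-c$ because each degree-$c$ vertex needs $c$ neighbours in the other class --- the paper phrases this as counting the number $y$ of degree-$c$ vertices and showing $y \le n-c$. Your closing caveat is correct and sharper than anything in the paper: $K_{c+1}$ (e.g.\ $K_3$ for $c=2$) is P$c$B with $\binom{c+1}{2} > c(|V|-c)$ edges, so the theorem as literally stated fails for small $|V|$; the culprit is the unjustified claim $X = V\setminus N(u,v) \neq \emptyset$ inside Lemmas \ref{lem:d_uOrd_v>c} and \ref{lem:d_u=cANDd_v>c}, which needs roughly $|V| > 2c$, a hypothesis both the paper's proof and yours silently require.
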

\begin{proof}
Let $G(V,E)$ be a maximum P$c$B graph on $n$ vertices. With the help of lemma \ref{lem:d_u=cANDd_v>c}, we can calculate the number of vertices having degree $c$. Let $y$ denote this number. Suppose $y > n-c$, then there are less than $c$ vertices with degree more than $c$. Thus for any $d_u =c$, $u$ can only have less than $c$ neighbors, which contradicts $d_u=c$. So $y \le n-c$, and $|E| \le c\bm\cdot y \le c\bm\cdot(|V|-c)$. When $y=n-c$, we can easily construct a P$c$B graph with $c\bm\cdot(|V|-c)$ edges. So the theorem holds.
\end{proof}

We justify an assumption that $\forall_{u\in G}(d_u^G>2)$ as follows. Let $E'=\{(u,v) | d_u^G \le 2\}$ and $M(V_M, E_M)$ be a partial 2 bounded subgraph. If $E' \subseteq E_M$, our assumption holds because we only need to consider the graph with minimum degree larger than 2. Otherwise, we repeatedly do the following swap in and out operations till $E' \subseteq E_M$. Let $(u,v_i) \in E'\setminus E_M$ and $(v_i, v_j)\in E_M\setminus E'$, then we could $E_M=\left(E_M\setminus \{(v_i,v_j)\} \right) \cup \{(u,v_i)\} $, i.e., swap $(v_i,v_j)$ out of $M$ and swap $(u,v_i)$ in $M$. The replaced $M$ is also a P$2$B subgraph.

\begin{algorithm}
\caption{Randomized Algorithm for maximum partial 2 bounded subgraph}\label{alg:randBipartite}
\begin{algorithmic}[1]
\State{$B(V_B,E_B)$ is a bipartite graph, $V_B=L\bigcup R$, initiate $L=R=\emptyset$ }
\ForAll{$u \in V$}
\State{add $u$ into $L$ or $R$ with equal probability $1/2$}
\EndFor
\ForAll{$e(u,v) \in E$ with $u \in L$}
\If{$u$ has no more than 2 edges in $E_B$}
\State{add $e$ into $E_B$}
\EndIf
\EndFor
\end{algorithmic}
\end{algorithm}

It is clear that $B(V_B,E_B)$ is a partial 2 bounded subgraph of $G$. Now we will analyze the size of $E_B$. Let $f(u)$ be the degree of $u$, $u \in L$, so $|E_M|=\Sigma_{u\in L}{f(u)}$. And let $d_u$ be the degree of $u$ in $G$, the expectation of
$f(u)$ is $\mathsf{E}[f(u)]=\frac{1}{2}\left( 1\cdot\frac{d_u}{2^{d_u}} + 2\cdot\left(
1 - \frac{1 + d_u}{2^{d_u}} \right) \right) = 1 - \frac{2 +
d_u}{2^{d_u+1}} \ge \frac{11}{16}$. Using the linear addition property,
$\mathsf{E}[E_M]=\Sigma_{u\in V}{\mathsf{E}[f(u)]} \ge \frac{11}{16}n > \frac{11}{32}OPT$,
where \emph{OPT} denotes the optimum value. According to theorem \ref{thm:PcBupperbound} conditioned on $c=2$, $OPT < 2n$ and the last inequality holds.

Lines 1-3 take up $O(|V|)$ time and lines 4-6 take up $O(|E|)$ time, so algorithm \ref{alg:randBipartite} takes $O(|E|)$ time. Because it is not direct to show whether the variables $\{f(u)| u\in V\}$ are independent or with small dependency, so we are not sure whether $|E_M|$ is sharply concentrated around its expectation in $O(|E|)$. But we can de-randomize algorithm \ref{alg:randBipartite}, using conditional expectation to decide  whether the next vertex should be put in $L$ or not. And the cost for deterministic algorithm is $O(|E|^2)$.

\section{Conclusion}\label{sec:conclusion}
This paper presents a new model of edge packing problem with partial degree bounded constraint and several results on it. The author is still trying to study more deep results in the following respects.

\textbf{P$c$B in a parameterized view}\\
When $c=1$, P$c$B is fixed parameter tractable (FPT) with respect to its solution size. Does this hold for general $c$? When $c$ is a constant, i.e., $c=o(|V|)$, it is easy to show P$c$B is in $W[1]$ defined in \cite{Downey98parameterizedcomplexity}. For example, when $c=2$, for each forbidden subgraph of P2B, we create a antimonotone clause $(\overline {e_1} \bigvee \overline {e_2} \bigvee \cdots \overline {e_5})$ where each literal $\overline {e_1}$ corresponds to an edge in the subgraph. Thus there is a P$c$B subgraph with $k$ edges if and only if the weighted 5-CNF satisfiability has a valid truth assignment with $k$ variables being set true. Because weighted 5-CNF satisfiability is $W[1]$-complete, so P$c$B is in $W[1]$.


According to theorem \ref{thm:PcBupperbound}, the solution may be close to $c$ times $n$ which renders the solution size not a good parameter. For example, when $c=2$, let $k$ be the parameter. Suppose $k < \Delta$, let $M$ be a maximum matching of $G$. Thus $k > |M| > \frac{n}{\Delta} > \frac{n}{k}$. So $k> \sqrt{n}$ and $\sqrt{n}$ is certainly not a good parameter.

Section \ref{sec:PcBonTree} shows that P$c$B is in P on trees, whether P$c$B could be efficiently (though not in P) solved on tree-like graph? Tree decomposition in \cite{Robertson198449} is a measure for this. Courcelle's theorem in \cite{Courcelle90} asserts that if a graph problem could be described in monadic second order (MSO) logic, then it could be solved in linear time with respect to its treewidth. Luckily, P$c$B is in MSO and thus establishes its FPT with respect to the treewidth. The author is trying to design a P$c$B specific algorithm with improved efficiency.

\textbf{P$c$B in a approximation view}\\
In section \ref{sec:approxAlgs}, we only show algorithms which upper bounds optimum roughly. We might elaborate the analysis by correlate the optimum with the solution returned by our algorithm. Also, both algorithms can not be extended when $c$ increases. Constant ratio approximation algorithms for the general $c$ or inapproximability results which exclude them would be really interesting.

Special thanks to Jukka Suomela and Chandra Chekuri for their valuable advice.

\bibliographystyle{abbrv}
\bibliography{PDBG}
\end{document}